\newcommand{\de}[1]{\left( #1 \right)}
\newcommand{\DE}[1]{\left\{#1\right\}}
\newcommand{\ket}[1]{\left| #1 \right\rangle}
\newcommand{\bra}[1]{\left\langle #1 \right|}
\newcommand{\ketbra}[2]{\left|#1\middle\rangle\middle\langle#2\right|}
\definecolor{darkred}{RGB}{200, 0, 0}
\definecolor{darkblue}{RGB}{0, 0, 180}
\definecolor{darkgreen}{RGB}{50, 150, 0}
\definecolor{col2}{HTML}{64A857}
\definecolor{col3}{HTML}{D1603D}
\definecolor{orange}{rgb}{1,0.5,0}
\newcommand*{\extra}[1]{}
\newcommand{\qx}{Q_X}
\newcommand{\qy}{Q_Y}
\newcommand{\qz}{Q_Z}
\newtheorem{thm}{Theorem}
\newtheorem{cor}{Corollary}
\newtheorem{prop}{Proposition}
\newtheorem{fact}{Observation}
\begin{document}

\title{Key rates for quantum key distribution protocols with asymmetric noise}

\author{Gl\'{a}ucia Murta}
\affiliation{QuTech, Delft University of Technology, Lorentzweg 1, 2628 CJ Delft, The Netherlands}
\affiliation{Institut f\"ur Theoretische Physik III, Heinrich-Heine-Universit\"at D\"usseldorf, Universit\"atsstraße 1, D-40225 D\"usseldorf, Germany}

\author{Filip Rozp\k{e}dek}
\affiliation{QuTech, Delft University of Technology, Lorentzweg 1, 2628 CJ Delft, The Netherlands}
\affiliation{Kavli Institute of Nanoscience, Delft University of Technology, Lorentzweg 1, 2628 CJ Delft, The Netherlands}
\author{J\'{e}r\'{e}my Ribeiro}
\affiliation{QuTech, Delft University of Technology, Lorentzweg 1, 2628 CJ Delft, The Netherlands}
\affiliation{Kavli Institute of Nanoscience, Delft University of Technology, Lorentzweg 1, 2628 CJ Delft, The Netherlands}
\author{David Elkouss}
\affiliation{QuTech, Delft University of Technology, Lorentzweg 1, 2628 CJ Delft, The Netherlands}
\author{Stephanie Wehner }
\affiliation{QuTech, Delft University of Technology, Lorentzweg 1, 2628 CJ Delft, The Netherlands}
\affiliation{Kavli Institute of Nanoscience, Delft University of Technology, Lorentzweg 1, 2628 CJ Delft, The Netherlands}

\begin{abstract}
We consider the asymptotic key rates achieved in the simplest quantum key distribution protocols, namely the BB84 and the six-state protocols, when non-uniform noise is present in the system.
We first observe that higher qubit error rates do not necessarily imply lower key rates. Secondly, we consider protocols with advantage distillation and show that it can be advantageous to use the basis with higher quantum bit error rate for the key generation. We then discuss the relation between advantage distillation and entanglement distillation protocols. 
We show that applying advantage distillation to a string of bits formed by the outcomes of measurements in the basis with higher quantum bit error rate is closely connected to the two-to-one entanglement distillation protocol DEJMPS \cite{DEJMPS}. Finally, we discuss the implications of these results for implementations of quantum key distribution.
\end{abstract}

\maketitle

\section{Introduction} 

Quantum key distribution (QKD) \cite{BB84,E91} is one of the most remarkable examples of the power of quantum mechanics. Many classical crypto-systems used for secure communication nowadays are based on computational assumptions. Computational assumptions make these systems vulnerable to retroactive attacks in case more powerful quantum computers become available in the future. In contrast, quantum mechanics allows two parties to distribute a key achieving information-theoretic security. This means that security is guaranteed even against an eavesdropper that has unlimited classical and quantum resources. Secure communication can then be achieved if this key is used in a one-time pad scheme~\cite{Vernam26,Shannon49} (for a discussion about the assumptions present in a QKD implementation, see \cite{LCT14}). 

Near term quantum technologies suffer from imperfections. Therefore, even in the absence of an active eavesdropper, a QKD implementation will be subjected to a finite amount of noise. In order to guarantee security, one is interested in designing protocols that can tolerate levels of noise compatible with current technology and at the same time achieve the highest possible key rates.

The simplest proposed QKD protocol, BB84~\cite{BB84}, is based on the conjugate coding ideas developed by Wiesner in~\cite{Wiesner83}. In the BB84, Alice prepares a single qubit state in one of the eigenstates of the Z-basis, $\DE{\ket{0},\ket{1}}$, or one of the eigenstates of the X-basis, $\DE{\ket{+},\ket{-}}$. An extension of the BB84, exploring three conjugate bases, was proposed in~\cite{6state} and is called the six-state protocol. In the six-state protocol, Alice can also prepare the qubit in one of the eigenstates of the Y-basis, $\DE{\ket{+_Y}=\frac{1}{\sqrt{2}}(\ket{0}+i\ket{1}),\ket{-_Y}\frac{1}{\sqrt{2}}(\ket{0}-i\ket{1})}$. The six-state protocol was proven to be more robust to noise~\cite{6state}. Intuitively, this is due to the fact that more parameters are characterized during the protocol, therefore restricting the possible actions of a potential eavesdropper.

In this article, we consider the asymptotic key rates that can be obtained in the BB84 and six-state protocols as a function of the quantum bit error rates (QBERs). We first consider instances of these protocols in which information reconciliation and privacy amplification are applied directly to the raw keys formed by the outcomes of Alice's and Bob's measurements.
Then, we consider the case when, additionally,
a sub-routine that allows Alice and Bob to select more correlated parts of their string, namely advantage distillation, is applied to the raw keys before information reconciliation. We discuss observed counter-intuitive behaviors of the asymptotic key rates with the QBER. As our main result, we show that in the presence of asymmetric noise, higher key rates may be obtained if the basis with higher QBER is used for the key generation in protocols with advantage distillation. This can have a direct impact for implementations that make use of advantage distillation{~\cite{rozpkedek2018near}}. Finally, we show that implementing the six-state protocol with advantage distillation and measurements in the basis with higher QBER, is closely connected to the two-to-one entanglement distillation protocol DEJMPS~\cite{DEJMPS}.

The manuscript is organized as following: in the remainder of this section we detail the general structure of the QKD protocols under consideration.
In Section~\ref{sec:results},  
we first consider the asymptotic key rates of the BB84 and the six-state protocols without advantage distillation. We then proceed to analyze the effect of advantage distillation and show interesting behaviors of the key rates as a function of the QBERs.
In Section~\ref{sec:entangdistil}, we discuss the relation of QKD and entanglement distillation protocols. {Finally, in Section~\ref{sec:experiments}, we discuss the implications of our results to experimental implementations.}

\subsection{Quantum Key Distribution protocols}

For an implementation of the BB84 or six-state protocols, the only required resources are the preparation, transmission and measurement of single-qubit states. That is, these protocols can be implemented in a prepare-and-measure set-up without the need of entanglement. However, both protocols have an equivalent entanglement-based implementation \cite{E91,BBM92}, in which a source (that may be in Alice's lab) produces a state that is distributed to Alice and Bob (ideally the maximally entangled state, $\ket{\Phi^+}=\frac{1}{\sqrt{2}}(\ket{00}+\ket{11})$). Then, upon receiving their systems, Alice and Bob perform measurements in randomly chosen bases (having two choices of basis for the BB84, and three for the six-state protocol). As long as the measurement devices are well characterized and controlled, an entanglement-based implementation allows one to relax the need for a precise characterization of the state preparation.
The entanglement-based version of the BB84 and six-state protocols played a key role to formalize their security proofs~\cite{SP00,Lo01,RennerThesis}.
From now on, for the purpose of our analyses, we focus on the entanglement-based version of these protocols. However we remark that for all our results, there is an equivalent implementation that requires only single-qubit states preparation.

The BB84 and the six-state protocols can be described by four main steps:
\begin{enumerate}
\item \textbf{Distribution and measurements:} Alice uses the source to produce a two-qubit state. She keeps one qubit and sends the other to Bob using a quantum channel. Upon receiving the systems, Alice and Bob, each randomly chooses a basis and performs the corresponding measurement. They repeat this procedure $N$ times. With the outcomes of their measurements they establish a string of $N$ bits each.
\item \textbf{Sifting and parameter estimation:} Alice and Bob communicate the measurement bases and discard the rounds in which different bases were used. Moreover, they sacrifice $m$ bits in order to estimate their average correlation and decide whether to abort or proceed with the protocol. If the protocol does not abort, the remaining $n$ bits constitute the raw key.
\item \textbf{Advantage distillation (optional):} The goal is to classically post-process the raw key, in order to increase the correlation between Alice and Bob, and get an advantage over the eavesdropper.
\item \textbf{Information reconciliation and privacy amplification:} In this step, Alice and Bob first implement an information reconciliation protocol that 
allows Bob to correct his bit-string for errors. Finally, they apply a privacy amplification protocol to transform a partially secure key of $n$ bits into a secure key of $l<n$ bits. 
\end{enumerate}

The parameters estimated in Step 2 are determined by the particular protocol. In an implementation of the BB84 protocol, Alice and Bob can estimate the values of the quantum bit error rates (QBERs) in the $X$ and $Z$ bases, $\qx$ and $\qz$. For the six-state protocol, they will also have an estimate of the QBER in the $Y$-basis, $\qy$. The QBER in the $Z$($X$)-basis, $Q_Z(Q_X)$, is the probability that Alice and Bob get different outcomes when they both measure their systems in the basis $Z$($X$). The QBER in the $Y$-basis is defined in a similar way, however, since the target state $\ket{\Phi^+}$ exhibits anti-correlation in the $Y$-basis, Bob flips his outcomes whenever he chooses to measure in the $Y$-basis.

In the originally proposed BB84 and six-state protocols, all the bases were chosen with equal probability among the set of bases specified by the protocols.
However, as shown in \cite{LCA05}, the efficiency of these protocols can be increased, without compromising security, if one of the bases is chosen with a higher probability. Then, in the asymptotic limit, the preferred basis is used almost all the time. The remaining bases are used only occasionally in order to test for the eavesdropper. This significantly increases the key rates, as only a small fraction of the rounds are discarded in the sifting process. In these protocols, the raw key is usually created from the rounds in which the preferred basis is used. The remaining rounds, in which other bases were chosen, are used for parameter estimation. For this reason, we denote the basis chosen with higher probability as the \emph{key generation basis}.

Advantage distillation, in Step 3, is an optional step. It consists of  Alice and Bob using two-way classical communication to select parts of the raw key that exhibit stronger correlation. This method was introduced in the context of classical protocols \cite{Mau93} and was proven to
be useful for 
 quantum protocols as well \cite{GL03,KBR07}. Usually, advantage distillation leads to significant drops in the key rate for the low noise regime, but it can considerably increase the noise tolerance of a protocol. For example, 
a BB84 implementation subjected to depolarizing noise (in which $\qx=\qz$) without advantage distillation can tolerate up to $11\%$ of QBER. If some advantage distillation is performed, the noise tolerance can be increased to $20\%$ QBER~\cite{GL03}.
 Advantage distillation protocols that have better performance in the low noise regime were also proposed~\cite{Watanabe}. 
 
 {Information reconciliation, in Step 4, aims at correcting Bob's string in order to make it equal to Alice's string. Information reconciliation can be implemented using only one-way communication from Alice to Bob.  Interactive protocols~\cite{BS94}, which are efficient to implement, are broadly used in QKD implementations~\cite{cascadeApp}. These protocols require two-way communication, however, they should not be confused with advantage distillation performed in Step 3. In advantage distillation, two-way communication is essential and, moreover, both Alice's and Bob's strings are modified during the protocol.}

\section{Results}\label{sec:results}
\subsection{Key rates for protocols without advantage distillation}\label{subsec:1way}

We first consider the BB84 and the six-state protocols when Alice and Bob skip Step 3. After measuring their quantum systems, Alice and Bob proceed to perform information reconciliation and privacy amplification. Information reconciliation protocols based on two-universal hashing functions leak the minimum amount of information necessary to correct for errors in Bob's string~\cite{BS94}. In~\cite{RW05}, it was proven that the minimum leakage of a one-way information reconciliation protocol is given by $n H(A|B)+\mathcal{O}(\sqrt{n})$, where $H(A|B)$ is the entropy of Alice's output conditioned on Bob's output, defined as $H(A|B)=-\sum_{a,b}p(A=a,B=b)\log p(A=a|B=b)$ with $p(A=a,B=b)$ being the probability that Alice and Bob obtain outcomes $a$ and $b$, respectively, for the measurement in consideration, $p(A=a|B=b)=\frac{p(A=a,B=b)}{p(B=b)}$ is the conditional probability, and the logarithms are taken in basis 2.

In order to analyze the key rate of the BB84 and the six-state protocol we can assume w.l.o.g. that the state distributed by the source is a Bell-diagonal state~\cite{RennerThesis}:
\begin{align}\label{eq:Belldiagstate}
\rho=\lambda_{00}\Phi_{00}+\lambda_{01}\Phi_{01}+\lambda_{10}\Phi_{10}+\lambda_{11}\Phi_{11},
\end{align}
where $\Phi_{ij}=\ketbra{\Phi_{ij}}{\Phi_{ij}}$ and $\ket{\Phi_{ij}}=X^iZ^j\otimes I \ket{\Phi^+}$. 
{Restricting the analysis to Bell-diagonal states is sufficiently general 
because for all states $\rho'$ such that 
$\lambda_{ij}=\bra{\Phi_{ij}}\rho'\ket{\Phi_{ij}}$, the corresponding Bell-diagonal state exhibits the same QBERs as the original state and leads to the lowest key rate~\cite{RennerThesis}.}

For the security analysis, it is also assumed that the measurements are perfect and all the noise can be mapped into the distributed state. In this case, the Bell coefficients $\DE{\lambda_{ij}}$ relate to the QBERs $\qx$, $\qy$, and $\qz$ by the following:
\begin{align}\label{eq:BelldiagstateQBER}
\begin{split}
\lambda_{00}&=1-\frac{(\qx+\qy+\qz)}{2},\\
\lambda_{01}&=\frac{\qx+\qy-\qz}{2},\\
\lambda_{10}&=\frac{-\qx+\qy+\qz}{2},\\
\lambda_{11}&=\frac{\qx-\qy+\qz}{2}.
\end{split}
\end{align}
Note that a Bell-diagonal state is completely characterized by the three QBERs $(\qx,\qy,\qz)$.

The key rates for the BB84 and the six-state protocols can be determined as a function of the coefficients of the estimated Bell-diagonal state, and therefore as a function of the QBERs. In a real implementation, in which a  finite number of rounds is considered, statistical effects play a role in the value of the key rate. However, for a sufficiently large number of rounds, the key rate of the six-state protocol, with $Z$ being the key generation basis, approaches the asymptotic value given by~\cite{Lo01,KGN05,RennerThesis}:
\begin{align}\label{eq:key6state}
R_{\rm 6state}=1-H(\DE{\lambda_{ij}}),
\end{align}
where $H(\DE{\lambda_{ij}})=\sum_{ij} -\lambda_{ij}\log \lambda_{ij}$, and the logarithms are taken in basis 2.

For BB84, since information about $Q_Y$ is not available, the key rate is given by the minimum over all possible values of $Q_Y$. This results in \cite{SP00,KGN05,RennerThesis}:
\begin{align}\label{eq:keyBB84}
R_{\rm BB84}=1-h(Q_X)-h(Q_Z),
\end{align}
where $h(x)=-x \log x -(1-x)\log(1-x)$ is the binary entropy.

If one of the other available bases is used for the key generation, the corresponding key rate can be obtained by simply permuting the QBERs in expressions \eqref{eq:key6state} and \eqref{eq:keyBB84}.
{Note that choosing the basis for key generation implies a choice of protocol. Once the key generation basis is fixed, Alice and Bob can run the protocol using this basis for almost all the rounds. The key rate now depends on the 
 information available to the eavesdropper when the estimated state, given in eq.~\eqref{eq:Belldiagstate}, is measured in the chosen key generation basis. 
A measurement in the  $X$-basis, or the $Y$-basis, performed on the estimated state, eq.~\eqref{eq:Belldiagstate}, can be seen as a $Z$-basis measurement performed on a rotated state.
The corresponding rotated state for when the $X$-basis, or the $Y$-basis, is used for the key generation measurement relates to eq.~\eqref{eq:Belldiagstate} by a permutation of the coefficients $\DE{\lambda_{ij}}$.} Eqs.~\eqref{eq:key6state} and \eqref{eq:keyBB84} are invariant under permutation of the QBERs. Therefore, the resulting key rate for protocols that use information reconciliation with minimum leakage is the same regardless of which of the available bases, {$\DE{X,Z}$ for BB84 and $\DE{X,Y,Z}$ for six-state}, is chosen for the key generation rounds.
This is stated in Proposition~\ref{prop:keyrates}.

\begin{prop}\label{prop:keyrates}
In an implementation of the BB84 or the six-state protocol in which an information reconciliation protocol with minimum leakage is used, the asymptotic key rate does not depend on which of the available bases is chosen as the key generation basis.
\end{prop}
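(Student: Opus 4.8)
The plan is to reduce the claim to an elementary symmetry of the two key-rate expressions \eqref{eq:key6state} and \eqref{eq:keyBB84}. As recalled in the paragraph preceding the proposition, fixing a key generation basis $B$ ($B\in\DE{X,Z}$ for BB84, $B\in\DE{X,Y,Z}$ for six-state) yields a protocol that is equivalent --- including on the eavesdropper's side --- to the standard $Z$-key-basis protocol run on the Bell-diagonal state obtained from \eqref{eq:Belldiagstate} by the local rotation that maps a $B$-basis measurement into a $Z$-basis measurement (and, when $B=Y$, also implements Bob's outcome flip). The asymptotic rate of that equivalent protocol is then given by \eqref{eq:key6state}, resp.\ \eqref{eq:keyBB84}, evaluated on the rotated state, whose Bell coefficients are a permutation of $\DE{\lambda_{ij}}$ --- equivalently, by \eqref{eq:BelldiagstateQBER}, a permutation of $(\qx,\qy,\qz)$. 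It then remains only to check that the relevant formula does not distinguish the permuted parameters from the original ones. The hypothesis of minimum-leakage information reconciliation enters exactly at this point: only with leakage $nH(A|B)+\mathcal{O}(\sqrt n)$ does the asymptotic rate collapse to the closed forms \eqref{eq:key6state}--\eqref{eq:keyBB84}; a sub-optimal reconciliation protocol could leak an amount that depends on $B$ asymmetrically.

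For the six-state protocol, the only $\DE{\lambda_{ij}}$-dependent term in \eqref{eq:key6state} is $H(\DE{\lambda_{ij}})=-\sum_{ij}\lambda_{ij}\log\lambda_{ij}$, which is a symmetric function of its four arguments and hence invariant under every permutation of $\DE{\lambda_{ij}}$; so $R_{\rm 6state}$ takes the same value for all three choices $B\in\DE{X,Y,Z}$. For BB84, the two available bases are $X$ and $Z$, and switching from $B=Z$ to $B=X$ is implemented by the Hadamard rotation on both sides, which (as established in the last paragraph) exchanges $\Phi_{01}\leftrightarrow\Phi_{10}$, i.e.\ $\qx\leftrightarrow\qz$. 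Since $h(\qx)+h(\qz)$ is symmetric under this transposition, $R_{\rm BB84}$ takes the same value for $B=X$ and $B=Z$. This proves the proposition.

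The point I would treat with care --- it is where the real content sits --- is the assertion (used above and stated in the text) that a change of key generation basis acts on $\DE{\lambda_{ij}}$ as a permutation, and not as some more general transformation to which \eqref{eq:key6state}/\eqref{eq:keyBB84} would be sensitive. To verify it, write $\ket{\Phi_{ij}}=(X^iZ^j\otimes I)\ket{\Phi^+}$ and use $(I\otimes A)\ket{\Phi^+}=(A^{T}\otimes I)\ket{\Phi^+}$: if Alice and Bob each rotate their qubit by a single-qubit Clifford $V$ before measuring $Z$ (Bob additionally applying $X$ when $B=Y$), the joint operation sends $\ket{\Phi_{ij}}$ to $(VX^iZ^jW^{T}\otimes I)\ket{\Phi^+}$ with $W\in\DE{V,XV}$, and for the rotations that occur here ($V=I$ for $B=Z$, $V=H$ for $B=X$, $V=HS^{\dagger}$ for $B=Y$) this operator is a Pauli up to a phase, so the Bell basis is mapped to itself and the rotated state is again Bell-diagonal with permuted coefficients. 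The Clifford structure is essential here: it is what prevents the rotation from mixing Bell states into a non-Bell-diagonal state that the rate formulas could see. A secondary bookkeeping point is that Bob's outcome flip for the $Y$ basis must be folded into the rotation, so that the quantity being counted is exactly the $Y$-basis QBER as defined in the text (with the flip).
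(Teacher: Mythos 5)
Your argument is correct and follows essentially the same route as the paper's own justification (given in the paragraph preceding the proposition): a change of key generation basis amounts to a local Clifford rotation that permutes the Bell coefficients $\DE{\lambda_{ij}}$, equivalently the QBERs, and the rate formulas \eqref{eq:key6state} and \eqref{eq:keyBB84} are invariant under these permutations. Your explicit check via $(I\otimes A)\ket{\Phi^+}=(A^{T}\otimes I)\ket{\Phi^+}$ that the rotations (including Bob's flip for the $Y$ basis) map the Bell basis to itself is a welcome elaboration of a step the paper asserts without detail, but it is not a different approach.
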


\noindent \emph{Remark:} It is important to remark that Proposition \ref{prop:keyrates} takes into account that information reconciliation is performed using a protocol with minimum leakage. The minimum leakage is asymptotically given by $h(Q_M)$, where $M$ is the basis used for the key generation rounds, $M\in \DE{X,Z}$ for BB84 and $M\in\DE{X,Y,Z}$ for the six-state protocol. Information reconciliation protocols with minimum leakage cannot be implemented in practice,
and protocols that have higher leakage are used instead. There exist efficient information reconciliation protocols with asymptotic leakage given by $f\cdot h(Q_M)$, where $ f \leq 1.2$~\cite{vDK97,ELAB09,TMPE17}. The use of a sub-optimal information reconciliation protocol creates an asymmetry of the QBERs in the key rate, and in this case, in order to maximize the key, it is advantageous to choose the basis with the lowest QBER for the key generation rounds.\vspace{1em}

Now we state an interesting fact regarding the key rates of the six-states protocol when an information reconciliation protocol with minimum leakage is used.

\begin{fact}\label{fact1way}
For fixed values of $Q_X$ and $Q_Z$, the key rate is \emph{not} a monotonically decreasing function of $\qy$.
\end{fact}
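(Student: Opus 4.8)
The plan is to reduce the claim to a sign computation for $\partial R_{\rm 6state}/\partial \qy$ at fixed $\qx,\qz$, using the parametrization in eq.~\eqref{eq:BelldiagstateQBER}. Since $\sum_{ij}\lambda_{ij}=1$ identically in $\qy$, we have $\sum_{ij}\partial_{\qy}\lambda_{ij}=0$, so when we differentiate $H(\DE{\lambda_{ij}})=\sum_{ij}-\lambda_{ij}\log\lambda_{ij}$ the additive $1/\ln 2$ terms produced by each summand cancel upon summation, leaving only the $\log\lambda_{ij}$ terms. From eq.~\eqref{eq:BelldiagstateQBER}, $\partial_{\qy}\lambda_{00}=\partial_{\qy}\lambda_{11}=-\tfrac12$ and $\partial_{\qy}\lambda_{01}=\partial_{\qy}\lambda_{10}=+\tfrac12$, which yields
\begin{align}\label{eq:dRdQY}
\frac{\partial R_{\rm 6state}}{\partial \qy}=\frac{1}{2}\log\frac{\lambda_{01}\lambda_{10}}{\lambda_{00}\lambda_{11}}.
\end{align}
Hence $R_{\rm 6state}$ being monotonically decreasing in $\qy$ would be equivalent to $\lambda_{01}\lambda_{10}\le\lambda_{00}\lambda_{11}$ holding throughout the physical region $\DE{\lambda_{ij}\ge 0}$, and it suffices to exhibit a nonempty interval on which the right-hand side of \eqref{eq:dRdQY} is strictly positive.

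For this I would take the convenient slice $\qx=\qz=q$ with $0<q<\tfrac12$. There eq.~\eqref{eq:BelldiagstateQBER} gives $\lambda_{00}=1-q-\tfrac{\qy}{2}$, $\lambda_{01}=\lambda_{10}=\tfrac{\qy}{2}$ and $\lambda_{11}=q-\tfrac{\qy}{2}$, so positivity of all four coefficients amounts exactly to $\qy\in[0,2q]$. As $\qy\to 0^+$ the numerator of \eqref{eq:dRdQY} vanishes while the denominator stays positive, so $\partial_{\qy}R_{\rm 6state}\to-\infty$; as $\qy\to(2q)^-$ one has $\lambda_{11}\to 0^+$ with $\lambda_{00},\lambda_{01},\lambda_{10}$ bounded away from $0$, so $\partial_{\qy}R_{\rm 6state}\to+\infty$. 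Thus $R_{\rm 6state}$ strictly decreases near $\qy=0$ and strictly increases near $\qy=2q$, which is precisely the assertion of Observation~\ref{fact1way}. If one wants the turning point in closed form, setting the right-hand side of \eqref{eq:dRdQY} to zero on this slice reduces to $(\qy/2)^2=(1-q-\qy/2)(q-\qy/2)$, whose root in $(0,2q)$ is $\qy=2q(1-q)$; for instance at $q=0.1$ the key rate decreases on $\qy\in(0,0.18)$ and increases on $\qy\in(0.18,0.2)$, and one may simply tabulate $R_{\rm 6state}$ at $\qy=0.18$ and $\qy=0.2$ to exhibit the increase numerically.

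The only delicate point is the bookkeeping: one must stay inside $\DE{\lambda_{ij}\ge 0}$ and correctly identify which coefficient(s) approach the boundary as $\qy$ varies, since by \eqref{eq:dRdQY} the sign of the derivative is governed entirely by which $\lambda_{ij}$ are small (here $\lambda_{11}$, not $\lambda_{00}$, is the one hitting zero as $\qy$ grows along the chosen slice). I also note that the conclusion is insensitive to whether ``monotonically decreasing'' is read in the weak or the strict sense, and to whether the rate is truncated at zero, because \eqref{eq:dRdQY} is strictly positive on a nonempty open interval on which (for small $q$) $R_{\rm 6state}$ is itself positive and bounded away from $0$; moreover the same computation shows the non-monotonic behavior occurs for every $q\in(0,\tfrac12)$, so it is not an artifact of a special choice of $\qx,\qz$.
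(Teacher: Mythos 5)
Your proposal is correct and takes essentially the same route as the paper: differentiate $R_{\rm 6state}$ in $\qy$ at fixed $\qx,\qz$ and show the derivative changes sign strictly inside the allowed region, your turning point $\qy=2q(1-q)$ on the slice $\qx=\qz=q$ being exactly the paper's general minimum $\qy^*=\qx+\qz-2\qx\qz$. The only caveat is your closing remark that the non-monotonicity occurs for every $q\in(0,\tfrac12)$: the paper additionally imposes $\qx+\qy+\qz<1$ (entanglement), under which the increasing portion near $\qy=2q$ is excluded for larger $q$, but this side remark does not affect the validity of your argument at, e.g., $q=0.1$, which fully establishes the Observation.
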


\begin{proof}
If we fix the values of $\qx$ and $\qz$, in order to ensure positivity of the corresponding Bell-diagonal state, the possible values of $\qy$ are in the range
\begin{align}\label{eq:Qy}
|\qx-\qz|\leq \qy \leq \qx+\qz.
\end{align}
Additionally, we require that $\qx+\qy+\qz<1$ in order to have an entangled state.
One can see this by inspecting eq.~\eqref{eq:BelldiagstateQBER}.

Now, evaluating the derivative of the key rate, eq.~\eqref{eq:key6state}, with respect to $\qy$, we conclude that the minimum occurs for
\begin{align}
\qy^*=\qx+\qz-2\qx\qz,
\end{align}
which can be strictly smaller than the maximum attainable value for $\qy$.
\end{proof}

Observation~\ref{fact1way} is illustrated in Figure~\ref{fig:Qz6state} for the family of Bell-diagonal states $(\qx=0.1,\qy,\qz=0.1)$.

\begin{figure}[H]
\begin{center}
  \includegraphics[scale=0.60]{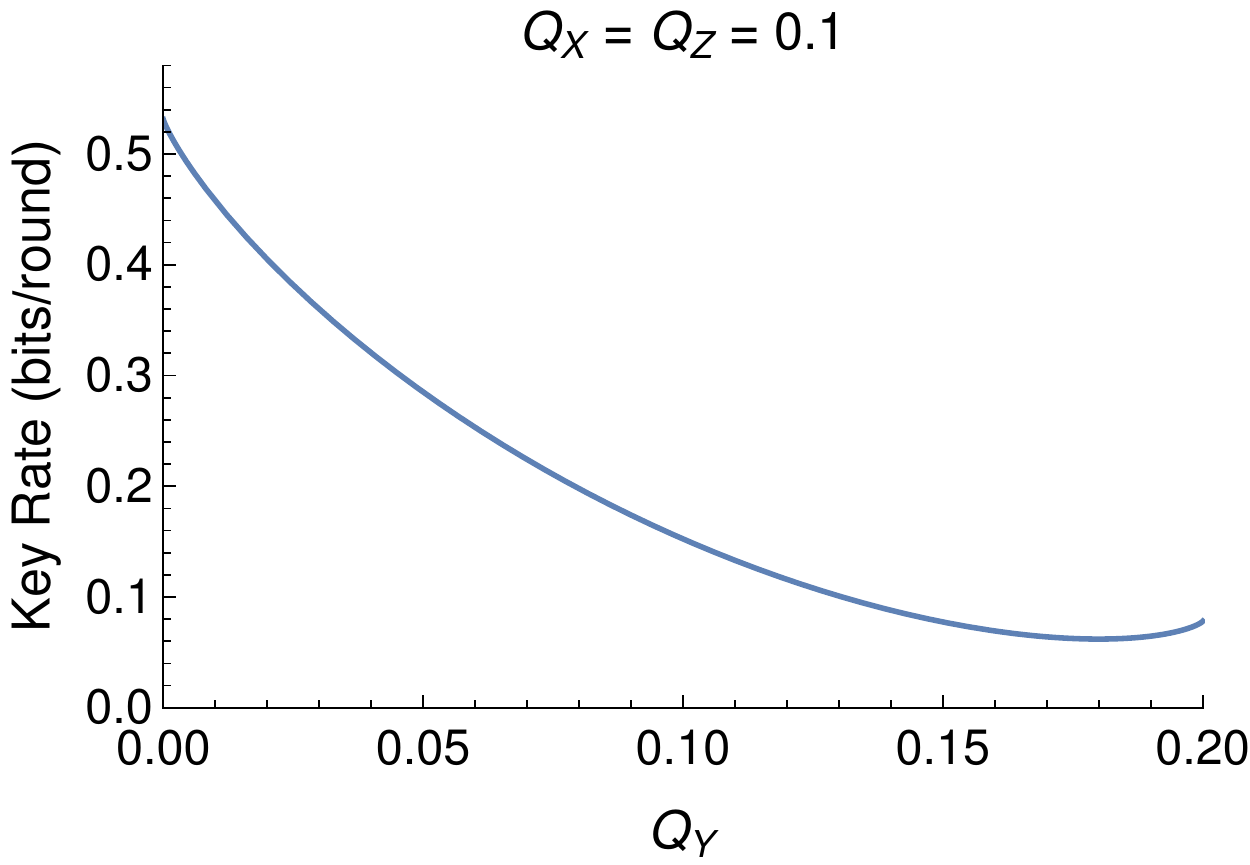}
 \caption{Asymptotic key rate of the six-state protocol with minimum leakage information reconciliation as a function of the QBER in the $Y$-basis, $Q_Y$, for the family of Bell-diagonal states with $Q_X=Q_Z=0.1$.}\label{fig:Qz6state}
 \end{center}
\end{figure}

Note that the minimum of the curve in Figure~\ref{fig:Qz6state} gives the key rate for the BB84 protocol when $\qx=\qz=0.1$. 

Observation~\ref{fact1way} together with the continuity of the key rate for the six-state protocol implies the following corollary.

\begin{cor}\label{cor:QBERs} There exist a state $\rho^{(1)}$ with QBERs $(\qx^{(1)},\qy^{(1)},\qz^{(1)})$ and a state $\rho^{(2)}$ with QBERs $(\qx^{(2)},\qy^{(2)},\qz^{(2)})$ such that 
\begin{align*}
\qx^{(1)}>\qx^{(2)}\;,\;\qy^{(1)}> \qy^{(2)}\;,\;\qz^{(1)}>\qz^{(2)}
\end{align*}
and 
\begin{align*}
 R_{\rm 6state}(\rho^{(1)})>R_{\rm 6state}(\rho^{(2)}).
\end{align*}
\end{cor}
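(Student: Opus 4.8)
The plan is to combine Observation~\ref{fact1way} with a short continuity argument; the corollary follows almost immediately once Observation~\ref{fact1way} is in hand. By Observation~\ref{fact1way}, if we fix $\qx,\qz>0$ and view the key rate as a function $f(\qy):=R_{\rm 6state}$ of the remaining QBER, then $f$ has a unique interior critical point at $\qy^{*}=\qx+\qz-2\qx\qz$, which lies \emph{strictly} below the largest admissible value $\qx+\qz$. Moreover $f$ is strictly convex: by eq.~\eqref{eq:BelldiagstateQBER} the map $\qy\mapsto\DE{\lambda_{ij}}$ is affine and non-constant, and the Shannon entropy $H(\DE{\lambda_{ij}})$ is strictly concave, so $f=1-H(\DE{\lambda_{ij}})$ is strictly convex. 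Hence $\qy^{*}$ is the \emph{strict} global minimiser of $f$, and for any $\qy^{\circ}$ with $\qy^{*}<\qy^{\circ}<\qx+\qz$ the gap $\delta:=f(\qy^{\circ})-f(\qy^{*})$ is strictly positive.

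Next I would fix concrete starting values, say $\qx=\qz=1/10$ (the family plotted in Figure~\ref{fig:Qz6state}), for which $\qy^{*}=9/50$, and pick $\qy^{\circ}=19/100\in(\qy^{*},1/5)$. Let $\rho^{(2)}$ be the Bell-diagonal state with QBERs $(1/10,\,9/50,\,1/10)$, so that $R_{\rm 6state}(\rho^{(2)})=f(\qy^{*})$, and let $\rho^{(1)}$ be the Bell-diagonal state with QBERs $(1/10+\epsilon,\,19/100,\,1/10+\epsilon)$ for a small parameter $\epsilon>0$ to be fixed later. For all sufficiently small $\epsilon$ these three numbers specify a legitimate entangled Bell-diagonal state: the positivity condition of eq.~\eqref{eq:Qy}, namely $|\qx^{(1)}-\qz^{(1)}|=0\leq\qy^{\circ}\leq\qx^{(1)}+\qz^{(1)}=1/5+2\epsilon$, and the entanglement condition $\qx^{(1)}+\qy^{(1)}+\qz^{(1)}=39/100+2\epsilon<1$ both hold. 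By construction every QBER of $\rho^{(1)}$ strictly exceeds the corresponding one of $\rho^{(2)}$: $\qx^{(1)}=\qz^{(1)}=1/10+\epsilon>1/10$ and $\qy^{(1)}=19/100>9/50=\qy^{(2)}$.

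It then remains to obtain the key-rate inequality. Since $R_{\rm 6state}=1-H(\DE{\lambda_{ij}})$ with $\DE{\lambda_{ij}}$ affine in $(\qx,\qy,\qz)$ by eq.~\eqref{eq:BelldiagstateQBER}, the key rate is jointly continuous on the region of admissible entangled states; I would therefore shrink $\epsilon$ further, if necessary, so that in addition $|R_{\rm 6state}(\rho^{(1)})-f(\qy^{\circ})|<\delta/2$. This yields $R_{\rm 6state}(\rho^{(1)})>f(\qy^{\circ})-\delta/2=f(\qy^{*})+\delta/2>f(\qy^{*})=R_{\rm 6state}(\rho^{(2)})$, which is exactly the assertion of the corollary. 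The only step needing a little care — more a bookkeeping task than a genuine obstacle — is checking that the constraints of eq.~\eqref{eq:Qy} survive the perturbation and that the minimum of $f$ is truly strict so that $\delta>0$; both are immediate once $\qx$ and $\qz$ are taken strictly positive. If one prefers to dispense with the continuity argument entirely, one can instead drop the $\epsilon$-perturbation, write down two explicit triples of rational QBERs, and simply compare the two values of $1-H(\DE{\lambda_{ij}})$ numerically.
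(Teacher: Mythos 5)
Your proposal is correct and follows essentially the same route as the paper: Observation~\ref{fact1way} (the interior minimum at $\qy^{*}=\qx+\qz-2\qx\qz$) combined with continuity of $R_{\rm 6state}$ in the QBERs, instantiated with an explicit family around $\qx=\qz=0.1$ — the paper does exactly this, simply exhibiting the pair $(0.1,0.2,0.1)$ versus $(0.098,0.18,0.098)$. Your strict-convexity argument (affine map $\qy\mapsto\DE{\lambda_{ij}}$ composed with strictly concave entropy) is a welcome extra detail that makes the strictness of the gap explicit, but it does not change the nature of the proof.
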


As an example of Corollary \ref{cor:QBERs}, take $\rho^{(1)}$ to be the state with QBERs $\qx^{(1)}=\qz^{(1)}=10\%$ and $\qy^{(1)}=20\%$, and $\rho^{(2)}$ to be the state with $\qx^{(2)}=\qz^{(2)}=9.8\%$ and $\qy^{(2)}=18\%$. It holds that $R_{\rm 6state}(\rho^{(1)})>R_{\rm 6state}(\rho^{(2)})$.\vspace{1em}

We now investigate the behavior of the singlet fidelity and the entanglement of formation \cite{BDSW96} for the family of states considered in Figure~\ref{fig:Qz6state}.

For an entangled Bell-diagonal state $\rho$, eq.~\eqref{eq:Belldiagstate}, with $\lambda_{00}>\frac{1}{2}$, the entanglement of formation~\cite{BDSW96}, $EoF$, is given by
\begin{align}\label{EoF}
 EoF(\rho)&=h\de{\frac{1}{2}+\sqrt{\lambda_{00}(1-\lambda_{00})}},
 \end{align}
and the singlet fidelity $F$ is
\begin{align}\label{Fidelity}
 F(\rho)&=\lambda_{00}.
\end{align}

Figure~\ref{figmonotones}  illustrates that both quantities are monotonically decreasing functions of $Q_Y$.
This supports the intuition that a state with higher QBER is less close to the ideal state. However, as stated in Observation~\ref{fact1way} - see also Figure~\ref{fig:Qz6state}, this monotonic behavior is not always observed in the key rates of the six-state protocol.

\begin{figure}[H]
\begin{subfigure}{0.24\textwidth}	
	\includegraphics[scale=0.34]{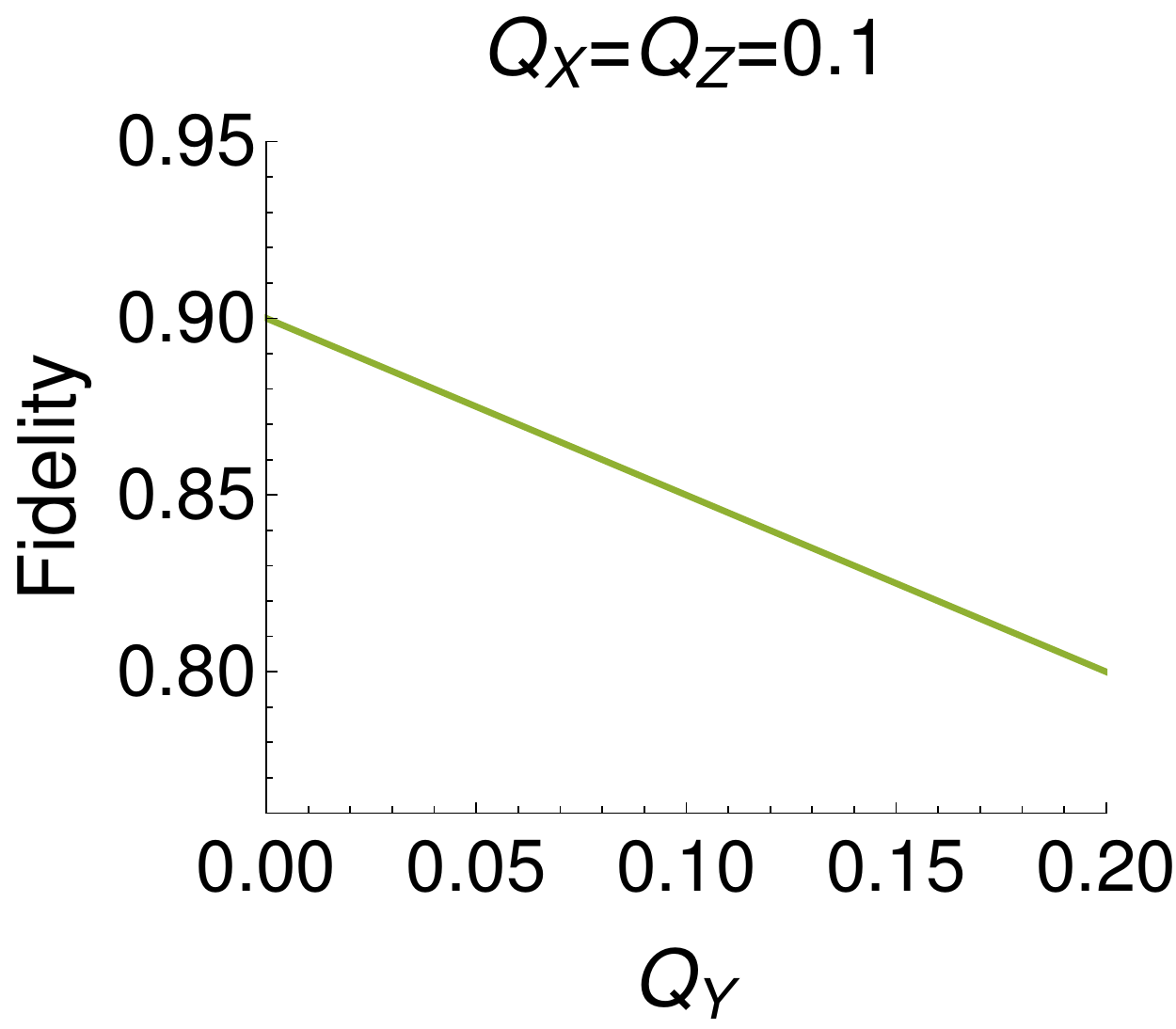}
\caption{}
\end{subfigure}%
\begin{subfigure}{0.25\textwidth}	
 	\includegraphics[scale=0.34]{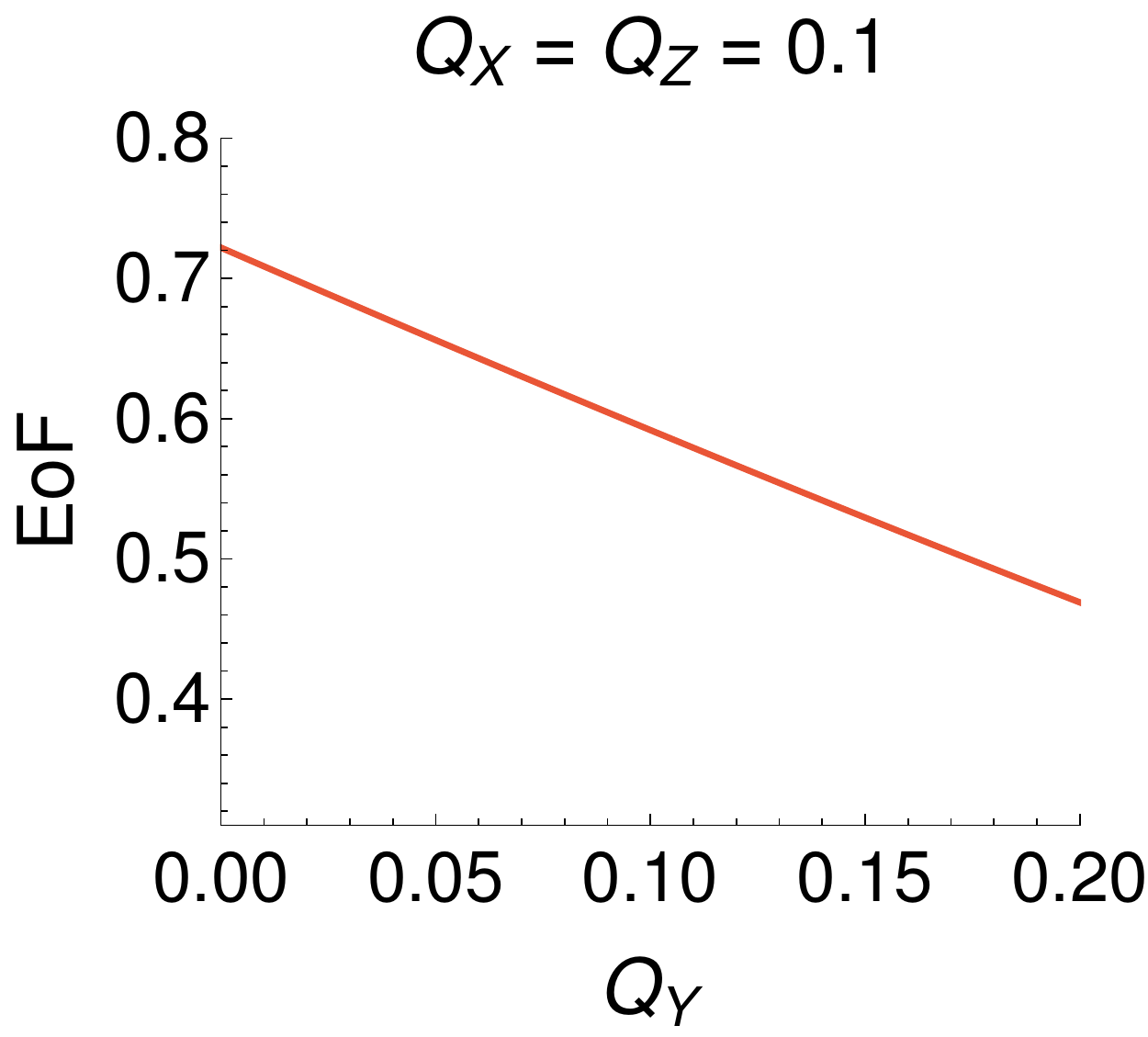}
 	\caption{ }
 	\end{subfigure}
\caption{Singlet fidelity (a) 
and entanglement of formation (b) for the family of Bell-diagonal state specified by QBERs $(0.1,\qy,0.1)$. Both quantities decrease monotonically as $\qy$ increases, in the range of possible values of $\qy$, $0\leq \qy \leq 0.2$. }\label{figmonotones}
 	\end{figure}

\subsection{Key rates for protocols with advantage distillation}\label{subsec:2way}

We now consider the BB84 and the six-state protocols when  advantage distillation is employed in Step 3. In particular, we consider the following advantage distillation protocol~\cite{BA07,KBR07,RennerThesis}:

\begin{algorithm}[H]{Protocol I: Advantage Distillation}\label{prot:AD}
\begin{algorithmic}[1]
\Statex Let $\DE{a_1,\ldots,a_n}$ and  $\DE{b_1,\ldots,b_n}$ be strings of bits held by Alice and Bob, respectively.
\State Alice and Bob divide their strings in blocks of two {consecutive} bits.
\For {each block $j$ of size 2}
\State Alice chooses a random bit $r\in \DE{0,1}$ and publicly communicates $(c_{j_1},c_{j_2}):=(a_{j_1}\oplus r, a_{j_2}\oplus r)$ to Bob.
\State Bob checks whether $(b_{j_1} \oplus c_{j_1}, b_{j_2}\oplus c_{j_2})\in \DE{\vec{0},\vec{1}}$. If that is the case he accepts, $acc=1$, else he sets $acc=0$.
\State Bob communicates $acc$ to Alice.
\State If $acc=1$ Alice keeps $a_{j_1}$ and Bob keeps $b_{j_1}$ for their raw key. Else they discard the two bits of the block.
\EndFor
\end{algorithmic}
\end{algorithm}

The key rates for BB84 and six-state protocol with advantage distillation, Protocol I, were derived in \cite{KBR07,RennerThesis}. For the six-state protocol, the key rate is given by 
\begin{align}\label{eq:key6stateAD}
R_{\rm 6state}^{{AD}}=\frac{1}{2}p_{\rm succ}^{AD}(1-H(\{\tilde{\lambda}_{ij}\})),
\end{align}
where $p_{\rm succ}^{AD}$ is the probability that Protocol I succeeds, \textit{i.e} that Alice and Bob do not discard a block. This occurs if either the two bits of Alice and Bob are equal or if both bits in the block are flipped. Note that steps 3 and 4 of Protocol I check whether the pair of bits of Alice and the pair of bits of Bob have the same parity.
If the raw key is generated by measurements in the 
$Z$-basis, then
\begin{align}\label{eq:psucc}
p_{\rm succ}^{AD}&=(\lambda_{00}+\lambda_{01})^2+(\lambda_{10}+\lambda_{11})^2\\
&=\qz^2+(1-\qz)^2.\nonumber
\end{align} 
And the coefficients $\DE{\tilde{\lambda}_{ij}}$ are given by
\begin{align}
\begin{split}
\tilde{\lambda}_{00}&=\frac{(\lambda_{00}+\lambda_{01})^2+(\lambda_{00}-\lambda_{01})^2}{2 p_{\rm succ}^{AD}},\\
\tilde{\lambda}_{01}&=\frac{(\lambda_{00}+\lambda_{01})^2-(\lambda_{00}-\lambda_{01})^2}{2 p_{\rm succ}^{AD}},\\
\tilde{\lambda}_{10}&=\frac{(\lambda_{10}+\lambda_{11})^2+(\lambda_{10}-\lambda_{11})^2}{2 p_{\rm succ}^{AD}},\\
\tilde{\lambda}_{11}&=\frac{(\lambda_{10}+\lambda_{11})^2-(\lambda_{10}-\lambda_{11})^2}{2 p_{\rm succ}^{AD}}.
\end{split}
\end{align}
where $\DE{\lambda_{ij}}$ relate to the QBERs by eq.~\eqref{eq:BelldiagstateQBER}.

In \cite{KBR07,RennerThesis} it was shown that applying advantage distillation, Protocol~I, has the same effect as if Alice and Bob would apply a quantum operation that brings two copies of a Bell-diagonal state with coefficients $\DE{\lambda_{ij}}$ into one copy of a Bell-diagonal state with coefficients $\DE{\tilde{\lambda}_{ij}}$ and then perform the measurement in this final state. This operation succeeds with probability $p_{\rm succ}^{AD}$. We will see, in Section~\ref{sec:entangdistil}, that the corresponding quantum operation is the application of bi-local CNOT gates {(i.e. Alice applies a CNOT to her two subsystems, and Bob does the same to his subsystems)}, followed by measurements of the target qubits and post-selection of the results.

The key rate for the BB84 protocol is obtained by taking the minimum of eq.~\eqref{eq:key6stateAD} over all possible values of $\qy$.

{Note that, for protocols with advantage distillation, the key rate, eq.~\eqref{eq:key6stateAD}, is not symmetric over permutation of the QBERs. Therefore, choosing a different basis (among the set of available bases) for  key generation may lead to different key rates}. We now state a curious observation about QKD protocols in which advantage distillation, given by Protocol I, is performed.

\begin{fact}\label{factAD}
In an  implementation of the BB84 or the six-state protocol in which advantage distillation, given by Protocol I, is performed, higher key rates may be obtained if the basis with higher QBER is used for the key generation rounds.
\end{fact}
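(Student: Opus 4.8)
The plan is to establish Observation~\ref{factAD} by exhibiting an explicit family of Bell-diagonal states for which the key rate with advantage distillation, eq.~\eqref{eq:key6stateAD}, strictly increases when the key generation basis is switched from the basis with lower QBER to the basis with higher QBER. Since the claim is an existence statement (``higher key rates \emph{may} be obtained''), a single well-chosen example suffices, but I would aim for a one-parameter family so that the phenomenon is visibly robust rather than a numerical accident. The natural choice is to reuse the family already studied in Figure~\ref{fig:Qz6state}, namely states with $Q_X=Q_Z$ fixed and $Q_Y$ large, so that $Q_Y$ is the basis with the highest QBER; concretely something like $(Q_X,Q_Y,Q_Z)=(0.1,\,Q_Y,\,0.1)$ with $Q_Y$ near the top of its admissible range $|Q_X-Q_Z|\le Q_Y\le Q_X+Q_Z$.

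The first step is to write down the two competing key rates. Using $Z$ as the key generation basis, eq.~\eqref{eq:psucc} gives $p_{\rm succ}^{AD}=Q_Z^2+(1-Q_Z)^2$, and the post-selected coefficients $\{\tilde\lambda_{ij}\}$ follow from eq.~\eqref{eq:BelldiagstateQBER} and the displayed formulas, yielding $R_{\rm 6state}^{AD}$ via eq.~\eqref{eq:key6stateAD}. For the comparison, I use the observation already invoked in the paragraph preceding Proposition~\ref{prop:keyrates}: measuring the estimated state in the $Y$-basis is equivalent to measuring a rotated Bell-diagonal state in the $Z$-basis, where the rotation acts on $\{\lambda_{ij}\}$ by a fixed permutation (the one sending the $Y$-QBER into the role played by $Q_Z$). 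Hence the key rate with $Y$ as key generation basis is obtained from the very same expression eq.~\eqref{eq:key6stateAD} after permuting $(Q_X,Q_Y,Q_Z)\mapsto$ the appropriate reordering; in particular its success probability becomes $Q_Y^2+(1-Q_Y)^2$ and its $\{\tilde\lambda_{ij}\}$ are the permuted analogues. The second step is then simply to evaluate both closed-form expressions at the chosen parameter value and check the strict inequality $R^{AD}_{\rm 6state}(Y\text{-basis})>R^{AD}_{\rm 6state}(Z\text{-basis})$. For the BB84 case one additionally takes the minimum over $Q_Y$ of each expression, as prescribed after eq.~\eqref{eq:psucc}, and checks that the minimum of the $X$-generation (equivalently $Z$-generation) rate is below the corresponding quantity when the roles of the bases are permuted; again a numerical check at an explicit point closes it.

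The main obstacle is not conceptual but bookkeeping: getting the permutation of $\{\lambda_{ij}\}$ associated with a $Y$-basis (versus $Z$-basis) key generation measurement exactly right, since an incorrect permutation would silently give a wrong $p_{\rm succ}^{AD}$ and wrong $\{\tilde\lambda_{ij}\}$. I would pin this down by the concrete rule ``$Y$-basis measurement on $\rho$ equals $Z$-basis measurement on $(H_Y\otimes H_Y^*)\rho(\cdots)^\dagger$'' and track where each Bell state $\ket{\Phi_{ij}}$ goes, or equivalently by directly recomputing the joint outcome distribution of a $Y$-basis measurement on eq.~\eqref{eq:Belldiagstate} and matching it to a Bell-diagonal state; the anti-correlation flip in the $Y$-basis (already noted in the definition of $Q_Y$) must be incorporated. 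Once the permutation is fixed, the remaining inequality is a comparison of two binary-entropy-type expressions at an explicit rational point, which is routine. I would close the proof by stating the explicit witness (e.g. the state with $Q_X=Q_Z=0.1$, $Q_Y=0.2$, for which $Y$-basis key generation with advantage distillation beats $Z$-basis key generation), mirroring how Corollary~\ref{cor:QBERs} is illustrated, and deferring the full numerical substitution to a figure analogous to Figure~\ref{fig:Qz6state}.
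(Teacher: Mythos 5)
Your proposal is correct and matches the paper's own treatment: the paper establishes Observation~\ref{factAD} exactly by evaluating the closed-form rate of eq.~\eqref{eq:key6stateAD} (with the Bell coefficients permuted according to the chosen key-generation basis) on the family $Q_X=Q_Z=0.1$ with varying $Q_Y$ (Figure~\ref{fig:Qz6stateAD}), supplemented by numerical scans over the full QBER range. Your explicit witness, e.g.\ $(Q_X,Q_Y,Q_Z)=(0.1,0.2,0.1)$ with $Y$-basis generation beating $Z$-basis generation, is a valid instance of the same argument.
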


\begin{figure}[H]
 \includegraphics[scale=0.85]{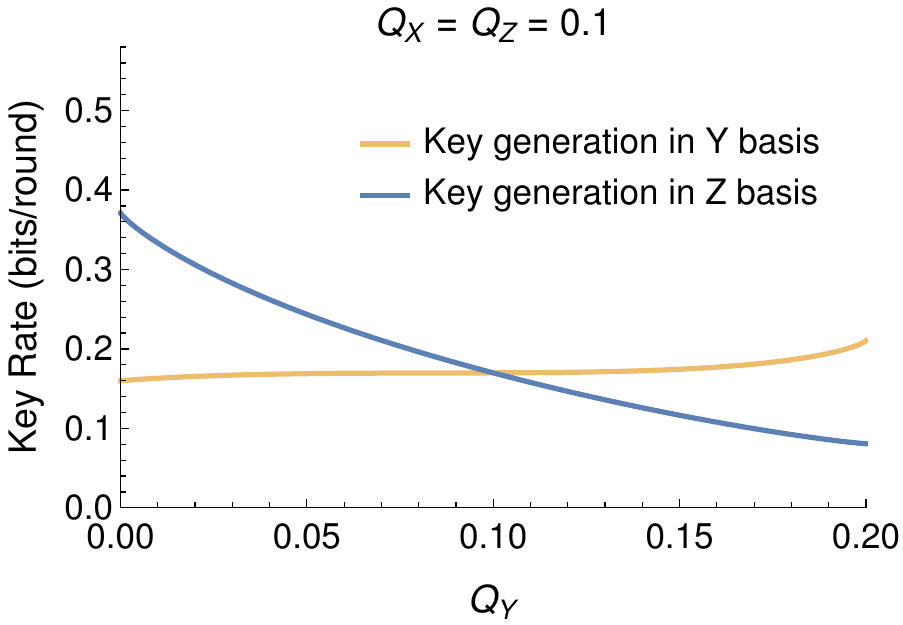}
 \caption{(Color online) Key rates for the six-state protocol with advantage distillation, given by Protocol I, for the family of states $Q_X=Q_Z=0.1$. Blue (dark gray) curve shows the key rate when the $Z$-basis is used for key generation, and the yellow (light gray) curve when the $Y$-basis is used for key generation.}\label{fig:Qz6stateAD}
\end{figure}

Figure~\ref{fig:Qz6stateAD} illustrates Observation~\ref{factAD} for the family of states $\qx=\qz=0.1$ considered in the previous section. In comparison with Figure~\ref{fig:Qz6state}, we note that, for lower values of $Q_Y$, higher rates are obtained when no advantage distillation is performed. However, as $Q_Y$ increases, an advantage is obtained with the use of advantage distillation and, specially, if the basis with higher QBER is used for the key generation rounds. Note also that, similar to Observation~\ref{fact1way}, an increase in the key rate with $Q_Y$ for high values of $\qy$ is also observed for the key generated with measurements in the $Y$-basis.

In order to explore Observation \ref{factAD} in more detail, we have performed an extensive numerical check over the range of possible values of QBERs $(\qx,\qy,\qz)$. 
For the BB84 protocol, Figure~\ref{fig:BB84AD3d} illustrates that, for almost all the  values of $\qx$ and $\qz$ that lead to positive key, the highest asymptotic key rate is obtained when the key generation basis is the one with higher QBER. This behavior inverts only for a small range of parameters, next to the limiting region where positive key can no longer be obtained. It is interesting to note that the success probability of the advantage distillation protocol, given in eq.~\eqref{eq:psucc}, is a monotonically decreasing function of the QBER of the key generation basis. However, even with the contribution of this factor to the key rate, see eq.~\eqref{eq:key6stateAD}, Figure~\ref{fig:BB84AD3d} shows that it is typically advantageous to use the basis with higher QBER for the key generation rounds.

\begin{figure}[H]
\centering
 \includegraphics[scale=0.9]{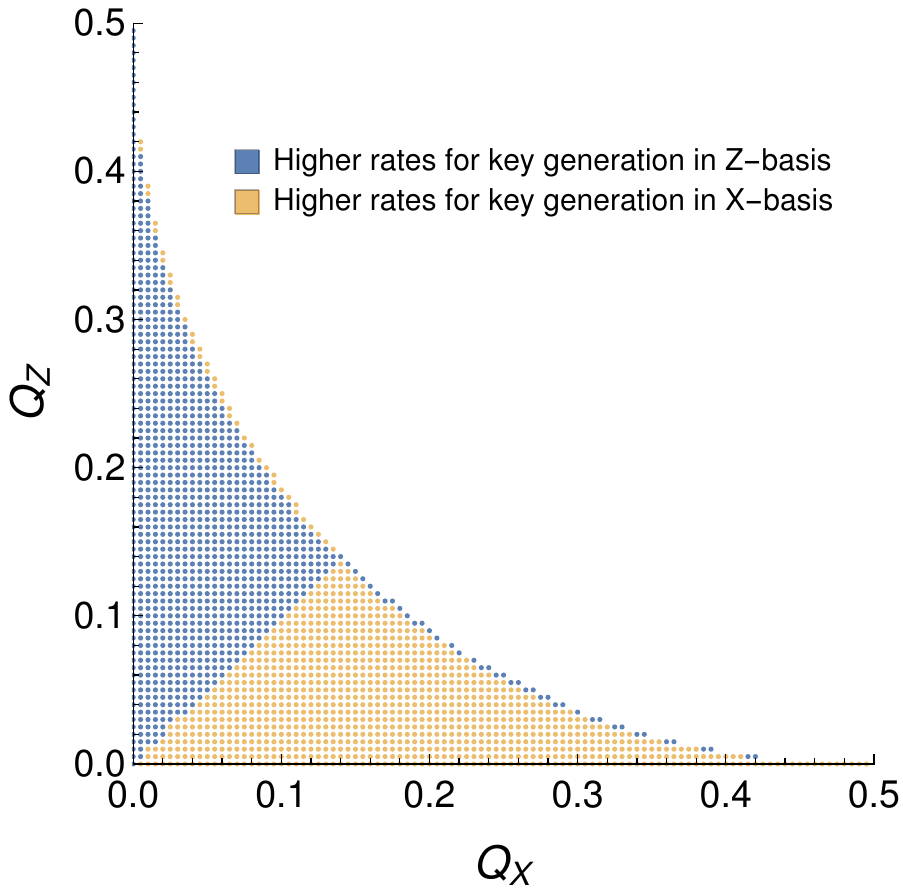}
 \caption{(Color online) Comparison of the key rates for the BB84 protocol with advantage distillation given by Protocol I when the $X$ and the $Z$ basis are used for the key generation. The blue (dark gray) dots  represent the parameters for which higher key rates are achieved when $Z$ is the key generation basis. The yellow (light gray) dots represent the case when higher rates are achieved if $X$ is used for key generation. {One can observe that, for almost all the range of parameters, higher rates are obtained when the basis with higher QBER is used for key generation. This behavior only inverts for a small range of parameters, near the limiting region where positive key can no longer be extracted.}}\label{fig:BB84AD3d}
\end{figure}

For the six-state protocol, we numerically compared, for the range of allowed parameters, the rates achieved when each of the three bases is used for key generation. Similarly to the BB84 case, we observed that higher key rates are obtained for key generation in the basis with higher QBER except for a small range of parameters. We found that it is not advantageous to use the basis with higher QBER for key generation only for some range of QBERs $(\qx,\qy,\qz)$ next to the region where no key can be obtained. As an example, for a state with QBERs $(\qx=0.39,\qy=0.39,\qz=0.01)$, one can obtain a secret key only if the $Z$-basis is used for key generation in the six-state protocol with advantage distillation given by Protocol I.

It is interesting to remark that the advantage of using the basis with higher QBER for protocols with advantage distillation can also occur in practical implementations where information reconciliation is performed using an one-way protocol~\cite{vDK97,ELAB09,TMPE17} {with non-optimal asymptotic leakage $f\cdot h(Q_M)$, for $f\leq1.2$}. Indeed, the advantage obtained by the use of a basis with higher QBER can be sufficiently large to compensate for the penalty of using an efficient information reconciliation protocol with higher leakage.

Protocol I can be generalized to blocks of arbitrary size $b$ \cite{Mau93}. Even though this leads to a significant decrease in the key rate in the low noise regime, higher noise tolerance can be achieved \cite{GL03,KBR07,BA07}. 
Figure \ref{fig:BB84AD7} illustrates the behavior of the key rates for the BB84 with advantage distillation using blocks of size 7. We now find that it is advantageous to use the basis with higher QBER for key generation in all the range of parameters that lead to positive key rate.

\begin{figure}[H]
\centering
 \includegraphics[scale=0.9]{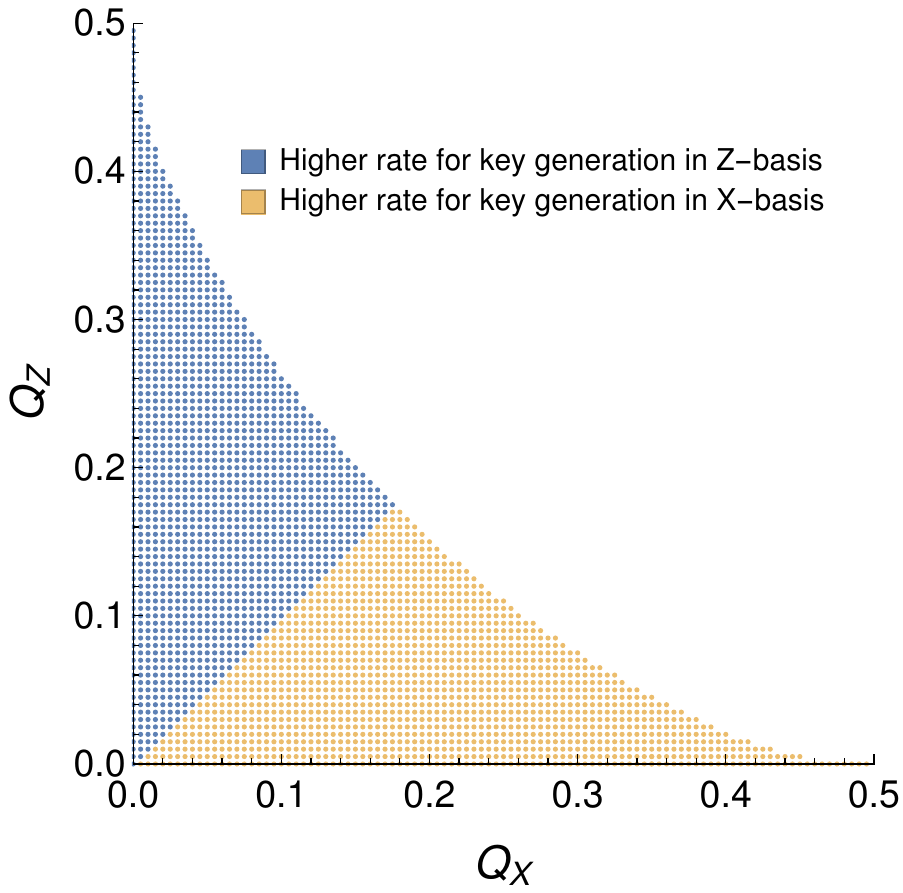}
 \caption{(Color online) Comparison of the key rates for the BB84 protocol with an advantage distillation protocol that uses blocks of size 7. The blue (dark gray) dots represent the parameters for which higher key rates are achieved when the $Z$-basis is used for key generation. The yellow (light gray) dots represent the case when higher key rates are achieved with measurements in the $X$-basis. {For this case, it is advantageous to always use the basis with higher QBER for key generation.}}\label{fig:BB84AD7}
\end{figure}

In Ref.~\cite{Watanabe}, Watanabe \textit{et al.} introduced an advantage distillation protocol that does not suffer from a big drop of the key rate in the low noise regime. The protocol introduced in  \cite{Watanabe} contains Protocol I as a sub-routine, and in the high noise regime the key rate coincides with the one obtained using Protocol I. Therefore, we expected that Observation~\ref{factAD} may also have an impact in this protocol. Indeed, in Figure \ref{fig:WatanabeAD} we illustrate that choosing the basis with higher QBER for key generation leads to higher rates for the family of states $Q_X=Q_Z=0.1$ when the advantage distillation protocol of Ref.~\cite{Watanabe} is performed. This effect played a role on the key rates estimated in Ref.~\cite{rozpkedek2018near}, for near-term implementations based on nitrogen-vacancy platforms and quantum repeaters.

\begin{figure}[H]
 \includegraphics[scale=0.9]{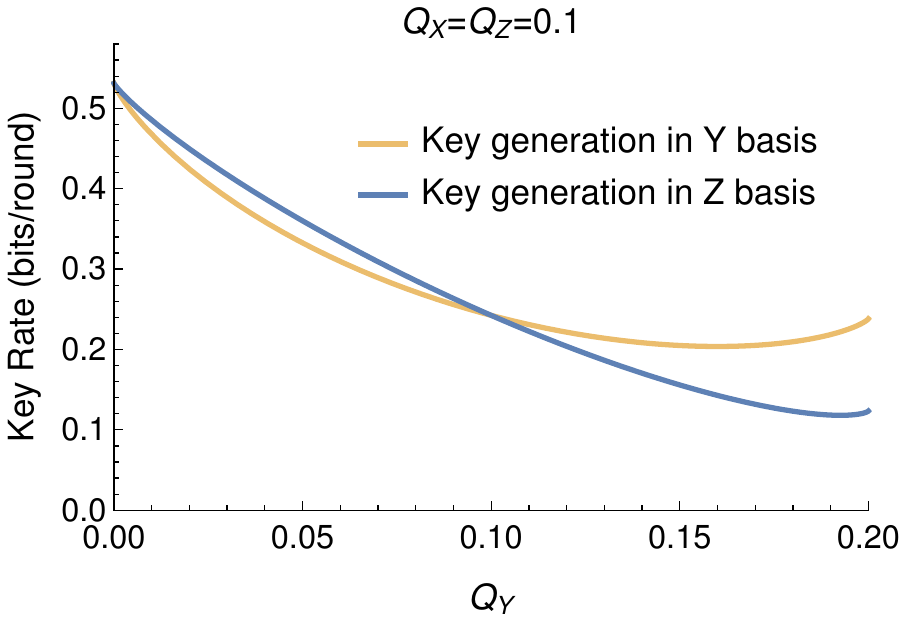}
 \caption{(Color online) Key rates for the six-state protocol with the advantage distillation protocol introduced in Ref.~\cite{Watanabe}, for the family of states $Q_X=Q_Z=0.1$. Blue (dark gray) curve illustrates the key rates when the $Z$-basis is used for the key generation rounds, and the yellow (light gray) curve when the $Y$-basis is used for the key generation rounds.}\label{fig:WatanabeAD}
\end{figure}

\section{QKD and entanglement distillation protocols}\label{sec:entangdistil}

In this section we show that the DEJMPS entanglement distillation protocol~\cite{DEJMPS} is related to advantage distillation, Protocol I, when it is applied to a string of bits generated by measurements in the basis with higher QBER.

A maximally entangled state provides a perfectly secure bit of key. Therefore, quantum key distribution and entanglement distillation are closely related~\cite{SP00,E91}. In fact, if the states shared by Alice and Bob have distillable entanglement, Alice and Bob could first distill maximally entangled states out of their noisy shared states, and then proceed to extract a perfectly secure key by measuring the distilled states. 
Interestingly, some entanglement distillation protocols can be completely mapped into a classical post-processing of the string of bits obtained after measurements on the initial states~\cite{GL03}. In that case, the entanglement distillation protocol has a corresponding QKD protocol that can be implemented in a prepare-and-measure set-up.
In a prepare-and-measure protocol, measurements of the quantum states are performed as soon as the states are received by Alice and Bob. This is of great interest for practical implementations, as no quantum memory is required to implement these protocols.
In \cite{GL03}, Gottesman and Lo characterized the properties that an entanglement distillation protocol needs to satisfy in order to be turned into a prepare-and-measure QKD protocol. The main idea is that some quantum operations (as CNOT gates) can be translated into classical operations (as XOR of the bits) on the string of bits generated by measurements in the initial state.

One-way information reconciliation based on hashing functions followed by privacy amplification is closely related to one-way entanglement distillation protocols based on Calderbank-Shor-Steane (CSS) codes that can correct for the corresponding amount of bit flip and phase flip errors \cite{SP00,Lo01}.
Similarly, the advantage distillation protocol, given by Protocol I, can be related to a two-to-one entanglement distillation protocol that takes two copies of a two-qubit state and maps it into one two-qubit state, hopefully more entangled than the original ones. {Under the assumption that the eavesdropper is restricted to individual attacks, it has been shown \cite{GW99,AMG03} that, a positive key can be extracted from any entangled state if advantage distillation with blocks of arbitrarily large size is applied. This shows that prepare-and-measure implementation is as powerful as entanglement distillation if the eavesdropper is restricted to individual attacks. This result was also generalized to high-dimensional QKD \cite{AGS03,Bruss03}. However, this equivalence does not hold true under general attacks~\cite{BA07}.}

{In the following, we will focus on a two-to-one entanglement distillation protocol. We will prove an interesting relation betwen the two-to-one entanglement distillation protocol  introduced in \cite{DEJMPS}, the DEJMPS protocol, and advantage distillation with blocks of size two, given by Protocol I.}

\begin{algorithm}[H]{Protocol II: DEJMPS entanglement distillation}\label{prot:AD}
\begin{algorithmic}[1]
\Statex Consider that Alice and Bob share $n$ copies of a Bell-diagonal state.
\State Alice and Bob {apply local unitary operations to each copy of their states in order to bring them to the form:}
\begin{align}\label{eq:stateDEJMPS}
\begin{split}
\rho=\lambda_{00} \Phi_{00} + \lambda_{10}\Phi_{10}  +\lambda_{11}\Phi_{11}+\lambda_{01}\Phi_{01},\\
\text{s.t. } \lambda_{00}> \frac{1}{2} \text{ and } \lambda_{00} >\lambda_{10}\geq \lambda_{11}\geq \lambda_{01}.
\end{split}
\end{align}
 \For {every $2$ systems}
\State Apply bi-local CNOT gates between the two copies.
\State Measure the target qubits and communicate the results.
\State If the measured flags are 00 or 11 keep the first system. Else discard both pairs.
\EndFor
\end{algorithmic}
\end{algorithm}

Protocol II includes a rotation of the initial states, step 1, before the application of the CNOT gates. 
{Any entangled two-qubit Bell-diagonal state can be brought to the form \eqref{eq:stateDEJMPS} by local unitaries (e.g., applying a Hadamard to each qubit leads to a permutation of the Bell states $\Phi_{01}$ and $\Phi_{10}$).}
The originally proposed DEJMPS protocol \cite{DEJMPS} includes  specific rotations that are independent of the input state.
In \cite{DNMV03}, it was proven that bringing the state to the form \eqref{eq:stateDEJMPS}, before applying the CNOT gates, maximizes the fidelity of the output state. Here we include in the DEJMPS protocol, as a first step, the rotations that optimize the output fidelity. Moreover, note that the DEJMPS protocol can also be applied to non-Bell-diagonal states, as any two-qubit entangled state that has singlet fidelity higher than $\frac{1}{2}$ can be brought to the form \eqref{eq:stateDEJMPS} by local operations and classical communication \cite[Appendix A]{BDSW96}. 

The following theorem states that the DEJMPS protocol is actually related to the six-state protocol with advantage distillation, given by Protocol I, when the basis with higher QBER is chosen for the key generation.

\begin{thm}\label{thm:AD_DEJMPS}
The following two procedures result in the same key rates:
\begin{itemize}
 \item[(i)] Alice and Bob implement the six-state protocol with advantage distillation, given by Protocol I, using the basis with higher QBER for key generation.
 \item[(ii)] Alice and Bob apply the DEJMPS protocol to every two copies of their states, and subsequently implement the six-state protocol, without advantage distillation, by measuring the distilled states.
\end{itemize}
\end{thm}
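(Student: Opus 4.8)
The plan is to recognize the DEJMPS operation---steps~2--4 of Protocol~II---as exactly the ``effective quantum operation'' that advantage distillation (Protocol~I) implements, and to absorb the one remaining difference (the input-dependent rotation in step~1 of Protocol~II versus the choice of key-generation basis in Protocol~I) into an invariance of the key-rate formula. I will use the equivalence recalled after eq.~\eqref{eq:psucc} and attributed to~\cite{KBR07,RennerThesis}: running Protocol~I on the $Z$-basis outcomes of two copies of a Bell-diagonal state with coefficients $\DE{\lambda_{ij}}$ yields the same statistics as applying bi-local CNOTs to the two copies, measuring the target qubits in the $Z$-basis, post-selecting on the flag outcomes $00$ or $11$ (which occurs with probability $p_{\rm succ}^{AD}$ of eq.~\eqref{eq:psucc}), and then measuring the surviving control pair---Bell-diagonal with the coefficients $\DE{\tilde{\lambda}_{ij}}$ given just after eq.~\eqref{eq:psucc}---in the $Z$-basis. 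Steps~2--4 of Protocol~II are precisely this CNOT--measure--post-select operation.

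First I would deal with the choice of basis in procedure~(i). As noted in the text (the discussion following eq.~\eqref{eq:keyBB84}), a measurement in a basis $M\in\DE{X,Y,Z}$ on the estimated state, eq.~\eqref{eq:Belldiagstate}, equals a $Z$-basis measurement on the state obtained from eq.~\eqref{eq:Belldiagstate} by the corresponding permutation of $\DE{\lambda_{ij}}$. From eq.~\eqref{eq:BelldiagstateQBER} one reads off $Q_Z=\lambda_{10}+\lambda_{11}$, $Q_X=\lambda_{01}+\lambda_{11}$ and $Q_Y=\lambda_{01}+\lambda_{10}$, so $Q_Z$ is the largest of the three QBERs precisely when $\lambda_{01}\le\lambda_{10}$ and $\lambda_{01}\le\lambda_{11}$. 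Hence, choosing $M$ to be the basis with the highest QBER brings the permuted state to one with $\lambda_{01}=\min\DE{\lambda_{01},\lambda_{10},\lambda_{11}}$; for an entangled state ($\lambda_{00}>\tfrac12$, so $\lambda_{00}$ exceeds each of the other three coefficients) this is the DEJMPS form~\eqref{eq:stateDEJMPS}, except that $\lambda_{10}$ and $\lambda_{11}$ may still need to be interchanged.

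Next I would dispose of that residual interchange---which is exactly what distinguishes procedure~(i)'s basis rotation from step~1 of Protocol~II. Inspecting eq.~\eqref{eq:psucc} and the definition of $\DE{\tilde{\lambda}_{ij}}$, both $p_{\rm succ}^{AD}$ and the coefficients $\DE{\tilde{\lambda}_{ij}}$ (each formula for $\tilde{\lambda}_{10}$, $\tilde{\lambda}_{11}$ being symmetric in $\lambda_{10},\lambda_{11}$) are invariant under $\lambda_{10}\leftrightarrow\lambda_{11}$, and therefore so is the rate, eq.~\eqref{eq:key6stateAD}; moreover the interchange is itself a local unitary (e.g.\ $S$ on Alice's side and $S^{\dagger}$ on Bob's for each copy), hence an admissible step-1 move of Protocol~II. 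So the rate of procedure~(i) equals the rate of ``Protocol~I on the $Z$-basis outcomes followed by the six-state protocol'' applied to the DEJMPS form~\eqref{eq:stateDEJMPS} of $\rho$. By the equivalence above this is the rate of ``apply the DEJMPS CNOT--measure--post-select operation, then run the six-state protocol on the surviving Bell-diagonal state'' (coefficients $\DE{\tilde{\lambda}_{ij}}$, obtained with probability $p_{\rm succ}^{AD}$), which by eq.~\eqref{eq:key6state} achieves $1-H(\DE{\tilde{\lambda}_{ij}})$ per distilled copy; since one distilled copy costs two input copies and succeeds with probability $p_{\rm succ}^{AD}$, the rate per input copy is $\tfrac12 p_{\rm succ}^{AD}\de{1-H(\DE{\tilde{\lambda}_{ij}})}$, i.e.\ eq.~\eqref{eq:key6stateAD}. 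This is also the rate of procedure~(ii), whose step~1 simply brings $\rho$ to the form~\eqref{eq:stateDEJMPS} before applying the very same operation.

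I expect the main obstacle to be the identification carried out in the middle two paragraphs: showing that ``use the basis with the highest QBER for key generation'' coincides, up to a symmetry the rate does not see, with ``bring the state to the DEJMPS canonical form~\eqref{eq:stateDEJMPS}.'' This requires translating the ordering in~\eqref{eq:stateDEJMPS} into maximality of $Q_Z$ via eq.~\eqref{eq:BelldiagstateQBER}, recognizing that the one part of the canonicalization not pinned down by that maximality ($\lambda_{10}\ge\lambda_{11}$) is a genuine symmetry of eq.~\eqref{eq:key6stateAD}, and checking that moving from the $Z$-basis to a general measurement basis in the \cite{KBR07,RennerThesis} equivalence is nothing but a local rotation of the input, so that the canonicalizing unitaries compose cleanly with the bi-local CNOTs and the target measurement. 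Everything else---the two invariances read off from eq.~\eqref{eq:psucc} and the $\DE{\tilde{\lambda}_{ij}}$ formulas, and matching the $\tfrac12 p_{\rm succ}^{AD}$ prefactor against ``distillation yield $\times$ six-state rate''---is routine.
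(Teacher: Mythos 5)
Your argument is correct, and it reaches the same conclusion by a noticeably different execution than the paper's own proof. The paper works operationally: it shows that steps 2--6 of Protocol~II followed by measuring the surviving qubits can be commuted into a prepare-and-measure form (the bi-local CNOTs commute with $Z$-measurements, and the post-selection on the target flags commutes with the control measurements), so that the quantum distillation literally \emph{becomes} Protocol~I acting on measurement outcomes; it then pushes the step-1 rotation into the measurement, eq.~\eqref{eq:rotateMesure}, and reads off from the ordering in \eqref{eq:stateDEJMPS} that the rotated basis is the one with the highest QBER. You instead start from procedure~(i), use the permutation-of-coefficients fact to reduce key generation in the highest-QBER basis to a $Z$-basis measurement on a permuted Bell-diagonal state, observe via eq.~\eqref{eq:BelldiagstateQBER} that this permuted state is the canonical form \eqref{eq:stateDEJMPS} up to a possible $\lambda_{10}\leftrightarrow\lambda_{11}$ swap, and then kill that residual freedom by noting it is a local unitary under which $p_{\rm succ}^{AD}$, the $\DE{\tilde{\lambda}_{ij}}$, and hence the rate \eqref{eq:key6stateAD} are invariant --- a bookkeeping point the paper leaves implicit, and which your write-up handles more carefully, since ``highest QBER'' alone does not fix the ordering $\lambda_{10}\geq\lambda_{11}$. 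The trade-off is that where the paper proves the correspondence between Protocol~I and the CNOT--measure--post-select operation from scratch (which is what makes its proof self-contained and yields the prepare-and-measure implementability remark), you import it as a cited equivalence; strictly, the paper only recalls from \cite{KBR07,RennerThesis} the abstract statement that advantage distillation acts like \emph{some} two-to-one operation with output coefficients $\DE{\tilde{\lambda}_{ij}}$ and success probability \eqref{eq:psucc}, and defers the identification of that operation with the bi-local-CNOT step of DEJMPS to its own proof. In your route you should therefore either reproduce that commutation argument or, more simply, verify directly that the DEJMPS recurrence applied to a state in the form \eqref{eq:stateDEJMPS} outputs exactly the coefficients $\DE{\tilde{\lambda}_{ij}}$ with success probability $p_{\rm succ}^{AD}$ --- a one-line algebraic check --- after which your formula-matching conclusion $\tfrac12 p_{\rm succ}^{AD}\bigl(1-H(\DE{\tilde{\lambda}_{ij}})\bigr)$ for both procedures goes through as stated.
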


\begin{proof}
In order to prove the equivalence of procedures (i) and (ii), we show that generating a string of bits by performing measurements in the basis with higher QBER followed by advantage distillation is equivalent to performing the DEJMPS entanglement distillation protocol and measuring the resulting state in the $Z$-basis.

We first note that  steps 2-6 of Protocol II, followed by the measurement of the remaining qubits, can be, equivalently, implemented in a prepare-and-measure scenario. This is due to the following observations:
(a) the CNOT gate commutes with measurements in the $Z$-basis; (b) the final measurements performed on the remaining control qubits, also commute with the post-selection operation applied in step 5. This is due to the fact that in step 5, a pair of qubits is discarded according to the outputs of the target qubits only, therefore this operation acts as the identity on the control qubits of the remaining systems. Observations (a) and (b) imply that in an implementation of steps 2-6 of Protocol II followed by measurement of the remaining qubits, one can first measure all the subsystems and then proceed to apply the CNOT gate, step 3, and post-selection of results, step 5. {In this equivalent description, in which all the systems are measured first, the 
CNOT gate and the post-selection act on classical strings. Their action, then, corresponds to Alice and Bob locally computing the XOR of their respective two bits and comparing if they have the same parity}.
And this is exactly the action of the advantage distillation described in Protocol I {(note that, in Protocol I, $acc=1$ iff $a_{j_1}\oplus a_{j_2}=b_{j_1}\oplus b_{j_2}$)}.

Protocol II includes a local rotation of the quantum states in step 1. Instead of rotating the state, we could equivalently rotate the operations. From the previous paragraphs we have seen that procedure (ii) of the Theorem can be implemented by performing measurements in the $Z$-basis on all the subsystems before applying the CNOT gates and post-selection of the results. The first step of measurements is described as
\begin{widetext}
\begin{align}
 \de{\sum_i\ketbra{i}{i}_A\otimes \sum_i\ketbra{i}{i}_B }\cdot \rho_{AB}\cdot \de{\sum_i\ketbra{i}{i}_A\otimes \sum_i\ketbra{i}{i}_B}
\end{align}
\end{widetext}
If the initial states are rotated before the measurements, $\rho_{AB}\mapsto U_A\otimes U_B\cdot \rho_{AB} \cdot U^{\dag}_A\otimes U^{\dag}_B$, then we have
\begin{widetext}
\begin{align}
\begin{split}
&\de{\sum_i\ketbra{i}{i}_A\otimes \sum_i\ketbra{i}{i}_B}\cdot (U_A\otimes U_B\cdot\rho_{AB}\cdot U^{\dag}_A\otimes U^{\dag}_B) \cdot \de{\sum_i\ketbra{i}{i}_A\otimes \sum_i\ketbra{i}{i}_B}\\
&=\de{\sum_i\ketbra{i}{i}U_A\otimes \sum_i\ketbra{i}{i}U_B}\cdot \rho_{AB}\cdot \de{\sum_iU^{\dag}_A\ketbra{i}{i}\otimes \sum_i U^{\dag}_B\ketbra{i}{i}}.\label{eq:rotateMesure}
\end{split}
\end{align}
\end{widetext}
Since the rotations are local, they can be mapped into the measurements, and the last expression of \eqref{eq:rotateMesure} describes measurements in the rotated bases  $\DE{U^{\dag}_A\ket{i}}$ and $\DE{U^{\dag}_B\ket{i}}$ on the original state $\rho_{AB}$.
By the equivalence established in the previous paragraph, Protocol II, followed by measurement of the remaining qubits, corresponds to applying  the advantage distillation, Protocol I, to a string of outcomes obtained from measurements performed in the corresponding rotated basis.

Finally, let us evaluate the effect of the rotation performed in step 1 of Protocol II. The QBERs of a Bell-diagonal state are given by  
\begin{align}
\begin{split}
Q_X&=\lambda_{01}+\lambda_{11},\\
Q_Y&=\lambda_{01}+\lambda_{10},\\
Q_Z&=\lambda_{10}+\lambda_{11}.
\end{split}
\end{align}
Given the relations satisfied by the coefficients after the rotation in step 1, eq.~\eqref{eq:stateDEJMPS}, we have that $Q_Z\geq Q_X$ and $Q_Z \geq Q_Y$. Therefore, after the initial rotation, $Z$ is the basis with the highest QBER. In the alternative picture in which all the measurements are performed first, and the rotation of the state is mapped into a rotation of the  measurements, see eq.~\eqref{eq:rotateMesure}, we have that the rotated bases, $\DE{U^{\dag}_A\ket{i}}$ and $\DE{U^{\dag}_B\ket{i}}$, correspond to measurements in the basis with the highest QBER.

So the DEJMPS protocol followed by a measurement in the $Z$-basis, corresponds to advantage distillation, given by Protocol I, applied to the outcomes of measurements in the basis with highest QBER. This proves the equivalence of procedures (i) and (ii) of Theorem~\ref{thm:AD_DEJMPS}.
\end{proof}

Theorem~\ref{thm:AD_DEJMPS} establishes that the DEJMPS protocol falls into the category of entanglement distillation protocols that have a corresponding prepare-and-measure QKD as characterized by Gottesman and Lo~\cite{GL03}. {Moreover, it shows that the particular rotation introduced in step 1 of Protocol II can be implemented in the prepare-and-measure scenario by choosing the basis with higher QBER for key generation.}

In \cite{DNMV03}, it was shown that the DEJMPS protocol, Protocol II, is the two-to-one entanglement distillation protocol that achieves the highest fidelity, among all possible protocols that involve Pauli rotations. In \cite{RSPEDW18} it was proven 
that the DEJMPS is the optimal two-to-one entanglement distillation protocol for rank 3 Bell-diagonal states. \textit{I.e.}, the highest possible fidelity is achieved with the highest possible probability of success, considering all possible protocols that take two copies into one.  
We now state analogous results for the key rates of the corresponding QKD protocol.
 
A Bell-diagonal state of rank up to 3 satisfies that one of the QBERs is equal to the sum of the other two. Without loss of generality, we can consider $\qy=\qx+\qz$. The corresponding Bell-diagonal state is then
\begin{align}\label{eq:rank3}
\rho=(1-\qx-\qz) \Phi_{00} + \qz\Phi_{10} +\qx \Phi_{11} .
\end{align}
We numerically compared the key rates achieved by the state given in eq.~\eqref{eq:rank3} for the six-state protocol with advantage distillation given by Protocol I and key generation in all of the three bases. In the region of positive key rate, we observed that, over all the range of  values of $\qx$ and $\qz$, higher rate is achieved when $Y$ is the key generation basis. 

For rank 4 states, using the basis with highest QBER is not always advantageous. 
As mentioned in Section~\ref{sec:results}, 
 a counter-example is given by the state with QBERs $(Q_X=0.39,Q_Y=0.39,Q_Z=0.01)$. For a Bell-diagonal state with the specified QBERs, a positive key rate can only be obtained by performing measurements in the $Z$-basis in an implementation of the six-state protocol with advantage distillation given by Protocol I. We remark, however, that this does not contradict the fact that the corresponding state after an application of the DEJMPS procedure has higher fidelity. Indeed, as we have seen from Observation 1, higher fidelity does not necessarily imply higher key rates in the six-state protocol.
Analyzing this example in detail, we find that the fidelity of the initial state is  $0.605$ and no key can be extracted by directly applying information reconciliation and privacy amplification. If entanglement distillation is performed without the previous rotations, \textit{i.e.} by applying only steps 4-6 of Protocol II, the final fidelity is $0.525$ and positive key can be extracted from the corresponding final state using a six-state protocol with an optimal one-way hashing information reconciliation. Applying the DEJMPS protocol, in which the initial rotations are performed, we obtain a state with higher fidelity, equal to $0.698$, yet this state does not lead to positive key rate in the six-state protocol.

\section{Implications to experimental implementations}\label{sec:experiments}

We now discuss the implications of our results to {fibre-based} implementations of quantum key distribution over long distances.

The most common way of transmitting qubits over long distances is by using photons sent through optical fibres. One of the challenges of a fibre-based implementation is that the transmissivity of the channel decays exponentially with the distance. It has been shown that this also leads to an exponential decay of the achievable secret-key rate over such channel~\cite{takeoka2014fundamental, pirandola2015fundamental}, thus making practical QKD over direct fibre connections impossible for larger distances. Significant amount of both theoretical and experimental efforts are being invested into overcoming this problem using the so-called quantum repeaters~\cite{briegel1998quantum}, which have the capability of beating the exponential scaling of secret-key rate with distance. One of the fundamental building blocks of such quantum repeater schemes is a memory node that can store quantum information over time. By dividing the channel into elementary links, entanglement generation can be attempted independently over those segments, thanks to the quantum memories at the intermediate repeater stations. Unfortunately, quantum states stored in such memories decohere with time.

Decoherence is often a complex process that could be modeled by a composition of different noise channels depending on the physical implementation of the quantum memory. However, often the dominant type of noise corresponds to the dephasing channel. This has been observed for many physical platforms which are promising candidates for long-lived quantum memories, such as nitrogen-vacancy centres~\cite{reiserer2016robust,kalb2018,maurer2012room}, trapped ions~\cite{kielpinski2001entanglement} and neutral atoms~\cite{schrader2004neutral}. Therefore, the dephasing channel is frequently used to model memory decoherence in quantum repeater literature, thus leading to expected non-uniform QBER over the three bases~\cite{luong2015overcoming, parameterregimes,razavi2009quantum,nemoto2016photonic,rozpkedek2018near}. Hence, the results of this paper will be highly relevant for choosing the key generation basis for entanglement based QKD schemes implemented across a future quantum repeater network. In fact, some of the authors of this manuscript have already applied the results of this work into their model of near-term proof-of-principle quantum repeaters based on nitrogen-vacancy centres~\cite{rozpkedek2018near}. 

{Regarding prepare-and-measure QKD schemes, the method of decoy states was introduced to overcome vulnerabilities due to imperfections in the source~\cite{Hwang_Decoy,Lo_Decoy}. In a decoy state protocol, the asymptotic key rates, 
\eqref{eq:key6state} and \eqref{eq:keyBB84}, are modified to account for the information leakage from the rounds in which multiple photons are generated. The modified key rates have a much more intricate dependence on the QBERs and, therefore, a detailed analysis is required to determine the effects of asymmetric noise in decoy state implementations. We leave it as an interesting open question for future investigation.}

\section{Acknowledgements}
We thank K. Goodenough, J. Kaniewski, S. de Bone, and T. Coopmans for helpful discussions. This work was supported by the European Research Council through a Starting Grant, and the Netherlands Organisation for Scientific Research (NWO) through a VIDI grant, and a Zwaartekracht grant. GM was also funded by the Deutsche Forschungsgemeinschaft (DFG, German Research
Foundation) under Germany's Excellence Strategy – Cluster of Excellence
Matter and Light for Quantum Computing (ML4Q) EXC 2004/1 – 390534769. 
DE was also supported by the Netherlands Organization for Scientific Research
(NWO/OCW), as part of the Quantum Software Consortium program (project number
024.003.037/3368).

\bibliographystyle{unsrt}
\bibliography{biblioADvsQBER}

\end{document}